\def\duzomniejsze{<\kern-.7mm<}
\def\duzowieksze{>\kern-.7mm>}
\def\textbf#1{{\bf #1}}
\def\beq{\begin{equation}}
\def\eeq{\end{equation}}
\def\be{\begin{equation}}
\def\ee{\end{equation}}
\def\ben{\begin{eqnarray}}
\def\een{\end{eqnarray}}
\def\beqa{\begin{eqnarray}}
\def\eeqa{\end{eqnarray}}
\def\eea{\end{array}}
\def\bea{\begin{array}}
\newcommand{\bei}{\begin{itemize}}
\newcommand{\eei}{\end{itemize}}
\newcommand{\bee}{\begin{enumerate}}
\newcommand{\eee}{\end{enumerate}}
\newtheorem{theorem}{Theorem}
\newtheorem{proposition}{Proposition}
\newtheorem{lemma}[theorem]{Lemma}
\begin{document}


\title{Free randomness amplification using bipartite chain correlations}

\author{Andrzej Grudka}
\affiliation{Faculty of Physics, Adam Mickiewicz University, 61-614 Pozna\'{n}, Poland}

\author{Karol Horodecki}
\affiliation{National Quantum Information Center of Gda\'{n}sk, 81-824 Sopot, Poland}
\affiliation{Institute of Informatics, University of Gda\'{n}sk, 80-952 Gda\'{n}sk, Poland}

\author{Micha{\l} Horodecki}
\affiliation{National Quantum Information Center of Gda\'{n}sk, 81-824 Sopot, Poland}
\affiliation{Institute of Theoretical Physics and Astrophysics, University of Gda\'{n}sk, 80-952 Gda\'{n}sk, Poland}

\author{Pawe{\l} Horodecki}
\affiliation{National Quantum Information Center of Gda\'{n}sk, 81-824 Sopot, Poland}
\affiliation{Faculty of Applied Physics and Mathematics, Technical University of Gda\'{n}sk, 80-233 Gda\'{n}sk, Poland}

\author{Marcin Paw{\l}owski}
\affiliation{Institute of Theoretical Physics and Astrophysics, University of Gda\'{n}sk, 80-952 Gda\'{n}sk, Poland}
\affiliation{Department of Mathematics, University of Bristol, Bristol BS8 1TW, U.K.}

\author{Ravishankar Ramanathan}
\affiliation{National Quantum Information Center of Gda\'{n}sk, 81-824 Sopot, Poland}




\date{\today}

\begin{abstract}
A direct analysis of the protocol of randomness amplification using Bell inequality violation is performed in terms of the convex combination of no-signaling boxes required to simulate quantum violation of the inequality. The probability distributions of bits generated by a Santha-Vazirani source are shown to be mixtures of permutations of Bernoulli distributions with parameter defined by the source. An intuitive proof is provided for the range of partial randomness from which perfect randomness can be extracted using quantum correlations violating the chain inequalities. Exact values are derived in the asymptotic limit of a large number of measurement settings.
\end{abstract}

\maketitle

{\it Introduction.} The question whether all processes in Nature are predetermined or if there are fundamentally unpredictable events is a most fundamental one. While it seems impossible to rule out complete determinism at all levels, the philosophical and practical implications such as in gambling and cryptographic scenarios have made it a question worthy of thorough investigation. In this regard, exciting new results have been obtained in \cite{Renner, Acin, Pawlowski} that the correlations in quantum systems can be used to amplify randomness. In particular, it has been shown that the presence of a small amount of unpredictability can be used to infer the presence of truly random events. 

Formally, the information-theoretic task is called randomness amplification, where the goal is to use an input source of partially random bits to produce a perfect random bit. The source of randomness is taken to be the Santha-Vazirani source \cite{SV} which is defined as follows. A source is called a Santha-Vazirani (SV) source if for any random variable $X = (X_1, X_2, \dots, X_n)$ produced by this source and for any $0 \leq i < n$ and $x_i = \{0, 1\}$, there holds 
\be
\frac{1}{2} - \epsilon \leq P(X_{i+1} = x_{i+1} | X_i = x_i, \dots, X_1 = x_1) \leq \frac{1}{2} + \epsilon. 
\label{SVcond}
\ee 
The model can be interpreted as each bit being obtained by the flip of a biased coin, the bias being fixed by an adversary who has knowledge of the history of the process. As such, the conditioning variables can be any set of pre-existing variables $W$ that could be a possible cause of the succeeding bit $X_{i+1}$. Each bit produced by the source is $\epsilon$-free in the sense that the probability distribution is $\epsilon$ away in variational distance from the uniform distribution. The goal of randomness amplification is to produce perfect random bits, i.e., those with $\epsilon_{new} = 0$. Note that randomness amplification differs from the task of (device-independent) randomness expansion, where it is assumed that an input seed of perfect random bits is available and the goal is to expand this given bit string into a larger sequence of random bits. Quantum non-locality has also found application in this latter task \cite{Pironio, Colbeck, Acin2} as well as in device-independent cryptographic scenarios \cite{BHK, Hanggi}.

In \cite{SV}, it was shown that the randomness produced by a single SV source cannot be amplified by classical means, by any deterministic function. The idea behind randomness amplification using quantum correlations in \cite{Renner, Acin} is then to use the SV source to choose the measurement settings of a set of spatially separated observers in a Bell test and to obtain random bits from some function of the measurement outcomes. In \cite{Renner}, the bipartite scenario of chained Bell inequalities \cite{BC} was shown to be useful in obtaining perfectly random bits as measurement outcomes for a limited range of $\epsilon$ values ($\epsilon < \frac{(\sqrt{2}-1)^2}{2}$ assuming correctness of quantum theory). In \cite{Acin}, a more complicated five-party scenario using Mermin inequalities \cite{Mermin} was considered and shown to generate perfect random bits for any initial value of $\epsilon < \frac{1}{2}$. The validity of the no-signaling principle is vital in both protocols, in fact no-signaling was shown to be necessary for perfect randomness to occur in any theory. 

A fundamental understanding of the probability distributions of bits generated by the source of partial randomness is necessary to study how and when tasks such as randomness amplification can be performed given different strengths of the adversary. In this paper, we investigate the structure of the SV source showing that the extremal points of the set of probability distributions from such a source are permutations of Bernoulli distributions. Indeed, this fact has already found application in the task of randomness amplification given adversaries limited to quantum resources \cite{Pawlowski}. Moreover, in the search for simpler (possibly bipartite) protocols for generation of perfect random bits from any initial value of $\epsilon$ for any no-signaling adversary, it becomes vital to derive intuitive methods that apply to arbitrary scenarios as well as to understand the limits of applicability of currently known protocols \cite{Renner}. We address both these issues, providing an analysis of the protocol of randomness amplification in terms of the randomness present in the no-signaling boxes that appear in convex decompositions of the quantum box of probabilities. This is then used for a direct derivation of the known range of $\epsilon$ values from which perfect randomness can be generated using the bipartite correlations violating the chained Bell inequalities as well as to extend the result to asymptotically exact values. 

{\it Structure of Santha-Vazirani sources}. The protocol for randomness amplification from SV sources using non-local quantum correlations involves the use of the source to choose the measurement settings in the Bell expression. For example, in the bipartite scenario of the chained Bell inequalities, a string of bits from the source is used to generate the measurement settings $\bf{x}$ and $\bf{y}$ of the two parties. Our aim in this section is to characterize the joint probability distributions $P(\bf{x},\bf{y})$ which can arise from the source, i.e., those that satisfy the SV source conditions (\ref{SVcond}). We investigate the structure of the SV sources and prove that the distributions obeying (\ref{SVcond}) are mixtures of permuted Bernoulli distributions. Formally, we state the following proposition (proof provided in the Supplementary Material) which will be used in the analysis of the randomness amplification protocol in subsequent sections.

{\bf Proposition 1.} Extremal points of the set of probability distributions from a Santha-Vazirani source with parameter $\epsilon$ are permutations of Bernoulli distributions with parameter $p=p_+$, 
where $p_+=\frac12 +\epsilon$. 

{\bf Remark.}  Not all permutations are allowed. 

{\it Randomness amplification from non-local quantum correlations.}  Consider the scenario where the bits generated by the SV source (that are partially free with respect to any set of space-time variables held by an adversary Eve) are used to choose the measurement settings in a Bell test by a set of $N$ spatially separated observers. Upon violation of the inequality, the parties process the measurement outcomes in order to obtain a perfect random bit. The general $N$ party Bell inequality for randomly chosen measurement settings can be written as
\be
\beta = \sum_{\vec{a}, \vec{x}} \alpha(\vec{a}, \vec{x}) P(\vec{a} | \vec{x}) \leq \beta_{L}.
\ee
Here, $\vec{x}$ denotes a set $\{x_1, \dots, x_N\}$ of measurement settings chosen by the $N$ parties, $\vec{a}$ = $\{a_1, \dots, a_N\}$ denotes the respective measurement outcomes, $\alpha(\vec{a}, \vec{x})$ are a set of coefficients and $P(\vec{a} | \vec{x})$ denotes the conditional probability of outcomes $\vec{a}$ given settings $\vec{x}$. The bound $\beta_{L}$ denotes the optimal value of the Bell parameter attainable within local hidden variable (LHV) theories. A quantum state under suitable measurement settings then generates the box of probabilities $B_{Q}$ that leads to optimal violation of the inequality $\beta_{Q}$. In the scenario where the measurement settings are not chosen freely but using an SV source with parameter $\epsilon$, one obtains a new LHV optimal value as a function of $\epsilon$ denoted by $\beta_{L}(\epsilon)$. The adversary Eve may attempt to simulate the value $\beta_{Q}$ using a convex combination of no-signaling boxes $B_{NS}^{(i)}$ which produce values $\beta_{NS}^{(i)}$, i.e., $\beta_{Q} = \sum_{i} p_{i} \beta_{NS}^{(i)}$ with $\sum_{i} p_{i} = 1$. The process of randomness amplification is then transparently based on the randomness present in the boxes $B_{NS}^{(i)}$. If some function of the measurement outcomes (in particular, simply one of the measurement outcomes of a single party \cite{Renner}) is random for all boxes $B_{NS}^{(i)}$ appearing in any convex decomposition, then the parties may use this as the output free random bit completely uncorrelated from Eve. It is immediately seen that to perform free randomness amplification ($\epsilon_{new} = 0$) from any initial value of $\epsilon < \frac{1}{2}$, one requires that the maximum no-signaling violation of the Bell inequality be achievable within quantum theory; if not, Eve may choose a finite fraction of deterministic boxes in the simulation. In general, for any given $\beta_{Q}$ one may write
\be 
\beta_{Q} = (1- \delta) \beta_{NS}^{(r)} + \delta \beta_{NS}^{(n r)},
\ee
where $\beta_{NS}^{(n r)}$ is the optimal violation of the inequality by boxes that do not give randomness and $\delta$ is the maximum fraction of such boxes that an adversary may use to successfully simulate $\beta_{Q}$. One may recast the above in terms of the probability of failure to win a game defined by $\beta$, where the no-signaling boxes with randomness succeed with probability $1$. Then, using  $\eta_{q} = (1- \beta_{Q})$ to denote the minimum probability of failure within quantum theory to win the game and similarly $\eta_{sv}$ to denote the minimum probability of failure to win the game using no-signaling boxes that do not yield randomness (while choosing the settings using the SV source), we obtain 
\be 
\delta \leq \frac{\eta_{q}}{\eta_{sv}}.
\ee
When $\delta = 0$, one obtains the randomness present in the no-signaling boxes.



{\it Randomness amplification using chain inequalities.} In \cite{Renner}, the amplification of randomness using chained Bell inequalities \cite{BC} was investigated. Two results were obtained, the first under the  assumption of correctness of quantum theory, i.e., that the observed distribution of measurement outcomes is as given by the theory, and the second without this restriction. In the former scenario, it was shown that for given $\epsilon < \frac{(\sqrt{2} - 1)^2}{2}$, there exists a protocol that uses $\epsilon$-free bits with respect to any set of space-time variables $W$ to obtain $\epsilon'$-free bits with respect to $W$ for any $0 < \epsilon' \leq \epsilon$. In particular, the quantum correlations between the measurement outcomes for the chain inequality were used to show that the output bit of one party's measurement is arbitrarily close to being uniform and uncorrelated with $W$. Following the general considerations of the previous section, we can now formulate an intuitive and simpler derivation of this result. 

The chained Bell inequality considers the bipartite scenario of two spatially separated parties Alice and Bob who each choose from a set of $N$ measurement settings: $x \in \{0, \dots, N-1\}_{A}$ for Alice and $y \in \{0, \dots, N-1 \}_{B}$ for Bob. Each measurement results in a binary outcome $a \in \{0,1\}$ for Alice and $b \in \{0,1\}$ for Bob. The chained Bell inequality is then written as
\be
\sum_{x = y || x = y +1} P(a \oplus b = 1 |x, y) + P(a \oplus b = 0 |0, N-1) \geq 1,
\label{chain}
\ee
where $\oplus$ denotes addition modulo $2$. Notice that out of the $N^2$ possible measurement pairs, only the $2N$ neighboring pairs where $x = y$ or $x = y + 1$ (sum modulo $N$) forming a chain are considered in the inequality and the LHV bound is obtained from the fact that perfect correlations in the outcomes for the $2N-1$ pairs in the sum automatically implies perfect correlation for the pair $\{0, N-1\}$. Quantum mechanics violates this inequality obtaining a value of $2 N \sin^2(\frac{\pi}{4N})$ which for large $N$ tends to the algebraic limit of $0$. This optimal quantum value is obtained by measuring on the maximally entangled state $|\phi_{+} \rangle = \frac{1}{\sqrt{2}} (|00\rangle + |11\rangle)$ with the measurement settings defined by the bases $\{|0_k \rangle , |1_k\rangle\}$ (for $k = x, y$). Here $|0_k\rangle = \cos{\frac{\phi_k}{2}} |0\rangle + \sin{\frac{\phi_k}{2}} |1 \rangle $, $|1_k\rangle$ = $ \sin{\frac{\phi_k}{2}} |0\rangle - \cos{\frac{\phi_k}{2}} |1 \rangle$ with the angles $\phi_k = \frac{\pi k}{2 N}$. The set of no-signaling boxes for this scenario was studied in \cite{Masanes}, a no-signaling box with precisely the PR-box structure of perfect correlations for the $2N-1$ neighboring pairs in the sum and perfect anti-correlations for the remaining pair exists which in addition to incorporating perfect randomness attains the optimal no-signaling value of $0$. A crucial observation is that if one pair of measurement settings is known to not occur, classical theories can simulate the optimal no-signaling violation of the inequality. 

Ideally the measurement settings would be chosen freely, however in this scenario they are chosen by Alice and Bob each using $r \mathrel{\mathop:}= \log_2 N$ bits from an SV source with non-zero $\epsilon$. The optimal classical strategy by an adversary is to choose the term that equals $1$ in the Bell expression corresponding to the pair of measurements that the SV source provides with minimum probability. One therefore considers the inequality
\ben
&&\sum_{x = y || x = y +1} P(x,y|w) P(a \oplus b = 1 |x, y) + \nonumber \\
&&P(0, N-1|w) P(a \oplus b = 0 |0, N-1) \geq p_{\text{min}}
\een
for each $w$ in the set of space-time variables with which the imperfectly free SV source may be correlated (and which may be thought of as held by the adversary Eve). The bound $p_{\text{min}} = \min_{x,y} P(x,y|w)$ is the minimum probability of a pair of measurement settings chosen by Alice and Bob, ideally $p_{\text{min}}^{ideal} = \frac{1}{2N}$ (for $\epsilon = 0$). 
As in the previous section, Eve tries to simulate the value $\beta_{Q} = \sin^2(\frac{\pi}{4N})$ using no-signaling boxes with randomness which produce $\beta_{NS}^{(r)}$ and those which do not incorporate randomness and give $\beta_{NS}^{(nr)}$. Crucially, for the chained inequalities all the vertices of the no-signaling polytope have been characterized in \cite{Masanes} and it is found that only those boxes with perfect randomness in the outcomes ($\epsilon_{new} = 0$) violate the chain inequality giving $\beta_{NS}^{(r)} = 0$. All other no-signaling boxes do not violate the inequality and produce $\beta_{NS}^{(nr)} \geq p_{\text{min}}$. Therefore, the optimal violation of the inequality that Eve can achieve using any fraction $\delta$ of non-random no-signaling boxes (and fraction $(1-\delta)$ of random ones) is given by $\beta_{sv} = \delta p_{\text{min}}$.



The measurement settings are chosen using $2r$ uses of the imperfect SV source by Alice and Bob, say the first $r$ bits give in binary the setting $x$ for Alice and the next $r$ bits give Bob's setting $y$. The minimum probability of occurrence for any measurement pair among the $N^2$ pairs is then $p_{-}^{2r} = (\frac{1}{2} - \epsilon)^{2r}$. From the set of obtained measurement settings, only those corresponding to the $2N$ neighboring pairs in the chain inequality are retained while the rest are discarded. Therefore, the minimum probability in the sequence of $2N$ pairs, $p_{\text{min}}$ is given by
\be  
p_{\text{min}} = \frac{p_{-}^{2r}}{p_-^{2r} + ||P(\bf{x}, \bf{y})||_{2N-1}},
\ee
where $||P(\bf{x}, \bf{y})||_{2N-1}$ is the $(2N-1)^{th}$ Ky Fan norm of the probability distribution $P(\bf{x}, \bf{y})$ generated by the source, i.e., the sum of the $2N-1$ largest probabilities.
The denominator of the above expression can be bounded from above by $2N p_{+}^{2r}$ where $p_{+} = (\frac{1}{2} + \epsilon)$, since $p_{+}^{2r}$ is the largest probability of occurrence of a bit string of length $2r$ generated by the source. We therefore obtain that the value of the Bell expression simulated by Eve is given by 
\be 
\beta_{sv} = \delta p_{\text{min}} \geq \delta \frac{p_{-}^{2r}}{2^{r+1} p_{+}^{2r}}.
\ee
For consistency with the value obtained in quantum theory, we have $\beta_{sv} \leq \beta_{Q}$, i.e.,
 $\delta \frac{p_{-}^{2r}}{2^{r+1} p_{+}^{2r}} \leq \sin^{2} \left(\frac{\pi}{2^{r+2}}\right)$.
The fraction of non-random boxes $\delta$ approaches $0$ (and perfect randomness is obtained) as we increase the number of measurement settings $N (= 2^{r})$ provided
\be  
\lim_{r \rightarrow \infty} \frac{\pi^2}{8} \frac{p_{+}^{2r}}{2^r p_{-}^{2r}} = 0,
\ee
giving $\frac{(\frac{1}{2} + \epsilon)^2}{2 (\frac{1}{2} - \epsilon)^2} < 1$, thus recovering $\epsilon < \frac{(\sqrt{2} - 1)^2}{2} \approx 0.086$. 

{\it Asymptotically exact bounds on randomness.} Here, we show an improved estimate on the minimum probability $p_{\text{min}}$ for obtaining a pair of measurement settings from the SV source which gives exact values for the range of allowed $\epsilon$ in the asymptotic limit of large $N$. Among the joint probability distributions that satisfy the $SV$ conditions are the {\it extremal} ones which as we have seen are (certain) permutations of the Bernoulli distribution. Our goal is to find the $(2N-1)^{th}$ Ky Fan norm of the Bernoulli distribution, which being the first $2N-1$ maximal probabilities is permutation invariant and therefore the same for all extremal distributions.




The $2N-1$ Ky Fan norm of the Bernoulli distribution $B$ satisfying SV conditions \eqref{SVcond} is:
\be
||B||_{2^{r+1} -1}= \sum_{i=0}^{m} {2r \choose i} p_+^{2r-i}p_-^{i}
\ee 
where $m$ is chosen to obtain the $2^{r+1} - 1$ largest probabilities.
The task of finding $m$ can be reformulated using
\be
m \leq \min_{c} \{c r: \sum_{i=0}^{cr} {2r \choose i} \geq 2^{r+1} -1 \}
\ee 
to finding the minimum $c$ satisfying the inequality above.  

We now state the following Lemma (with proof in the Supplementary Material) which provides bounds on $||B||_{2^{r+1}-1}$, and leads to the asymptotically exact range of $\epsilon$ from which perfect randomness may be extracted. 

{\bf Lemma 2.} The Ky Fan norm of the Bernoulli distribution $||B||_{2^{r+1}-1}$ with parameter $p_{-} = (1/2 - \epsilon)$ for large $r$ obeys 
\be 
{2r \choose cr} p_{-}^{cr} p_{+}^{(2-c)r} < ||B||_{2^{r+1} - 1} < k {2r \choose cr} p_{-}^{cr} p_{+}^{(2-c)r},
\label{norm-bound}
\ee
where $c$ is the solution to $2^{2r H(c/2)} = 2^r$ ($c \approx 0.22$) and $k = \frac{(2-c) (1-2\epsilon)}{2(1-c -2\epsilon)}$. 

The above Lemma is now used to find when the upper bound on $\delta$ approaches zero, i.e., when
\be 
\lim_{r\rightarrow \infty} \frac{\pi^2}{16} \frac{||B||_{2^{r+1}-1}}{2^{2r} p_{-}^{2r}} = 0
\ee
The bounds in \eqref{norm-bound} imply that the limit is defined by the behavior for large $r$ of
\be 
\frac{ {2r \choose cr} p_{-}^{cr} p_{+}^{(2-c)r}}{2^{2r} p_{-}^{2r} }\approx 2^{2rH(c/2)} \frac{p_{+}^{(2-c)r}}{2^{2r} p_{-}^{(2-c)r}}.
\ee
For the limit to be $0$, we need $(1/2+\epsilon)^{2-c} < 2 (1/2 - \epsilon)^{2-c}$ for $c \approx 0.22$ 
giving $\epsilon < \frac{2^{1/(2-c)} - 1}{2 (2^{1/(2-c)} + 1)} \approx 0.0961$ which is the asymptotically exact maximal value for $\epsilon$ due to the upper and lower bounds derived on the Ky Fan norm. In fact, for $\epsilon$ larger than this critical value, free randomness cannot be obtained in the protocol, i.e., the newly generated randomness value $\epsilon_{new} > \epsilon$ for this range as shown below. 

Note that the amount of randomness $\epsilon_{new}$ obtained in the protocol is given by $(1/2) + \epsilon_{new} = (1-\delta) \times (1/2) + \delta \times 1$, i.e., $\epsilon_{new} = \delta/2$, so that for finite $r$,
\be 
\epsilon_{new} = \frac{\sin^2(\frac{\pi}{2^{r+2}})}{2 \frac{p_-^{2r}}{p_-^{2r} + ||B||_{2^{r+1}-1}}}.
\ee
Using the inequality, 
$\sum_{i=0}^{\lfloor c r \rfloor} {2r \choose i} \leq 2^{H(c/2) 2r}$ for finite $r$,
we see that $c \geq 0.22$ in the definition of $||B||_{2^{r+1}-1}$ for any finite $r$. We may therefore lower bound $\epsilon_{new}$ as
\be  
\epsilon_{new} \geq  \frac{\sin^2(\frac{\pi}{2^{r+2}}) (p_-^{2r} + (2^{r+1} -1) p_-^{0.22r} p_+^{1.78r})}{2 p_-^{2r}}.
\ee
Using $\sin{x} \geq x - x^3/6$, we see after some algebraic manipulation that for $\epsilon > 0.0961$, the lower bound above exceeds $\epsilon$ meaning no amplification is possible. 

{\it Conclusions.} Randomness amplification from quantum correlations violating Bell inequalities is shown to be directly related to the fraction of no-signaling boxes incorporating randomness which appear in any possible convex combination of boxes simulating the violation. The bipartite scenario of the chained Bell inequalities (which includes the commonly considered CHSH inequality) is arguably the most studied as well as being experimentally friendlier than multi-party scenarios. An intuitive and simple derivation is provided for the range of partial randomness from which perfect randomness can be generated using quantum correlations violating these inequalities. In addition, asymptotically exact bounds obtained on the minimum probability of a pair of measurement settings from an SV source enable us to identify the most imperfect source from which perfect randomness can still be generated using these correlations. The characterization of the probability distributions obeying the SV source conditions performed here is of independent interest in general scenarios as well \cite{Pawlowski}.

{\it Acknowledgments.}
K. H. and M. H. thank Renato Renner for introducing them to the problem of amplification of randomness. 
The paper is supported by ERC AdG grant QOLAPS and  by
Foundation for Polish Science TEAM project cofinanced
by the EU European Regional Development Fund. Part of this work was done in National Quantum Information Center of Gda\'{n}sk. K. H. would also like to acknowledge discussion with Justyna {\L}odyga.

\bibliographystyle{apsrev}


\begin{thebibliography}{99}
\bibitem{Renner}
R. Colbeck and R. Renner,
Nature Physics {\bf 8}, 450 (2012).

\bibitem{Acin}
R. Gallego, L. Masanes, G. de la Torre, C. Dhara, L. Aolita and A. Acin,
arXiv:1210.6514 (2012). 

\bibitem{Pawlowski}
P. Mironowicz and M. Paw\l{}owski,
arXiv: 1301.7722 (2013).


\bibitem{SV}
M. Santha and U. V. Vazirani,
Proceedings of the 25th IEEE Symposium on Foundations of Computer Science (FOCS-84), 434 (1984).

\bibitem{Pironio}
S. Pironio et al.,
Nature {\bf 464}, 1021 (2010).

\bibitem{Colbeck}
R. Colbeck, 
PhD dissertation, University of Cambridge (2007).

\bibitem{Acin2}
A. Acin, S. Massar and S. Pironio, 
Phys. Rev. Lett. {\bf 108}, 100402 (2012).

\bibitem{BHK}
J. Barrett, L. Hardy and A. Kent,
Phys. Rev. Lett. {\bf 95}, 010503 (2005).

\bibitem{Hanggi}
E. H\"{a}nggi, R. Renner and S. Wolf,
EUROCRYPT 2010, 216 (2010).

\bibitem{BC}
S. L. Braunstein and C. M. Caves, Annals of Physics {\bf 202}, 22 (1990).

\bibitem{Mermin}
N. D. Mermin, 
Phys. Rev. Lett. {\bf 65}, 3373 (1990).

\bibitem{Masanes}
N. S. Jones and L. Masanes,
Phys. Rev. A {\bf 72}, 052312 (2005). 

\end{thebibliography}

{\bf Supplementary Material.}
Here, we present the formal proof of the proposition and lemma stated in the text.

\begin{proposition}
Extremal points of the set of probability distributions from Santha-Vazirani  source
are permutations of Bernoulli distributions with parameter $p=p_+$, 
with $p_+=\frac12 +\epsilon$. 
\label{prop:SV-iid}
\end{proposition}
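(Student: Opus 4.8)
The plan is to reparametrize the problem in terms of conditional probabilities rather than joint probabilities, which turns the defining constraints into simple box constraints. For a prefix $s = (x_1, \dots, x_i)$ write $q_s = P(X_{i+1} = 1 \mid X_1 = x_1, \dots, X_i = x_i)$. Since $p_- = \frac{1}{2} - \epsilon > 0$, every prefix probability satisfies $P(X_1 = x_1, \dots, X_i = x_i) \geq p_-^{\,i} > 0$, so each $q_s$ is well defined. Because $p_+ + p_- = 1$, the two SV inequalities in \eqref{SVcond} for the outcomes $x_{i+1} = 0$ and $x_{i+1} = 1$ collapse to the single condition $p_- \leq q_s \leq p_+$. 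Hence the assignment $P \mapsto (q_s)_s$, ranging over all $2^n - 1$ prefixes of length $0 \leq i \leq n-1$, is a bijection between the SV polytope and the cube $[p_-, p_+]^{\,2^n - 1}$, with the joint law recovered by the product rule $P(x_1, \dots, x_n) = \prod_{i=0}^{n-1} q_{(x_1, \dots, x_i)}^{x_{i+1}} (1 - q_{(x_1, \dots, x_i)})^{1 - x_{i+1}}$.

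First I would establish the crucial linearity property: each joint probability $P(x)$ is an \emph{affine} function of any single coordinate $q_{s^*}$ when the remaining coordinates are held fixed. Indeed, $q_{s^*}$ enters the product formula for $P(x)$ in at most one factor---namely $q_{s^*}$ or $1 - q_{s^*}$ if the path of $x$ passes through the node $s^*$, and not at all otherwise---so $P(x)$ is linear in $q_{s^*}$. Now suppose $P$ is an SV distribution for which some coordinate $q_{s^*}$ lies strictly inside $(p_-, p_+)$. Perturbing only that coordinate to $q_{s^*} \pm t$ for small $t > 0$, while keeping all other $q_s$ fixed, leaves all box constraints satisfied and therefore produces two valid SV distributions $P_{\pm}$. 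By the affine dependence, $P = \frac{1}{2}(P_+ + P_-)$, and $P_+ \neq P_-$ since the dependence on $q_{s^*}$ is non-constant (at least one string extends the prefix $s^*$, and all other factors are strictly positive). Thus $P$ is not extremal.

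Consequently, at every extremal point all conditionals take the extreme values $q_s \in \{p_-, p_+\}$, so each joint probability is a product of $n$ factors drawn from $\{p_+, p_-\}$, i.e. $P(x) = p_+^{\,a} p_-^{\,n-a}$ for some $a$ depending on the path. It then remains to verify that the resulting distribution is a permutation of the Bernoulli law $b(x) = p_+^{|x|} p_-^{\,n - |x|}$; I would do this by induction on $n$ using the tree structure. At each internal node exactly one child-edge carries weight $p_+$ and the other $p_-$ (regardless of whether $q_s = p_+$ or $q_s = p_-$), so the multiset of leaf probabilities splits as $p_+ \cdot M_{n-1} \cup p_- \cdot M_{n-1}$, where $M_{n-1}$ is the corresponding multiset for the two depth-$(n-1)$ subtrees. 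Pascal's rule $\binom{n}{k} = \binom{n-1}{k-1} + \binom{n-1}{k}$ shows this multiset coincides with $\{p_+^k p_-^{\,n-k}\}$ with multiplicity $\binom{n}{k}$, which is exactly the multiset of values of $b$; hence the extremal distribution is a permutation of the Bernoulli distribution, consistent with the Remark that not every permutation need arise.

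I expect the main obstacle to be the second step, namely isolating the affine dependence of the joint law on a single conditional and using it together with the box description of the polytope to produce an explicit convex decomposition. The nonlinearity of the product parametrization makes it tempting but incorrect to perturb several conditionals at once; the argument works precisely because each coordinate enters $P(x)$ linearly, so single-coordinate perturbations---and only these---yield a midpoint decomposition. Care is also needed to confirm that all prefix probabilities are strictly positive (guaranteeing the conditionals are defined and the cube description is exact), which is where the hypothesis $\epsilon < \frac{1}{2}$ enters.
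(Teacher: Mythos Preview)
Your argument is correct and takes a genuinely different route from the paper. The paper first proves a general decomposition lemma: given convex sets $S_X$, $S_Y$ of distributions over $X$ and $Y$, any joint $p(x,y)$ whose marginal lies in $S_X$ and whose conditionals $\{p(y|x)\}_y$ lie in $S_Y$ is a mixture of products $r(x)\,p^{(x)}(y)$ with $r$ extremal in $S_X$ and each $p^{(x)}$ extremal in $S_Y$. This lemma is then applied iteratively, peeling off one bit at a time, and induction shows that each extremal SV distribution is a permutation of Bernoulli. Your approach instead parametrizes the SV set by the full tree of conditionals $(q_s)$, observes that the constraints become the cube $[p_-,p_+]^{2^n-1}$, and exploits the affinity of each $P(x)$ in a single coordinate $q_{s^*}$ to exhibit an explicit midpoint decomposition whenever some $q_{s^*}$ is interior. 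Both proofs rest on the same structural insight---conditional probabilities are the natural coordinates---but the paper's lemma produces an explicit convex decomposition of an \emph{arbitrary} SV distribution (which is independently useful), while your perturbation argument is shorter and pins down the extremals directly without building the full decomposition. One small point of presentation: in your inductive step the two depth-$(n-1)$ subtrees may carry \emph{different} assignments of the $q$'s, so a priori their leaf multisets differ; it is the induction hypothesis that forces both to equal the Bernoulli multiset $M_{n-1}$, after which the Pascal-rule computation goes through as you describe.
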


{\bf Remark.}  Not all permutations are allowed.

To prove the proposition we will need the following lemma.
\begin{lemma} 
Consider two alphabets $X$ and $Y$, with $|X|=K,|Y|=M$.
Consider some convex sets $S_X$ and $S_Y$ of the probability distributions 
over the spaces $X$ and $Y$ respectively. Consider an arbitrary joint probability distribution $p(x,y)$.
Let $p(y|x)$ be the corresponding conditional probability distribution and $p(x)$ the marginal one. 
Suppose now that  for any fixed $x$, the distribution $\{p(y|x)\}_y$  belongs to $S_Y$,
and the distribution $\{p(x)\}_x$ belongs to $S_X$. Then we can write $p(x,y)$ 
as a mixture of probability distributions of the form
\be
\tilde p(x,y) = p^{(x)}(y) r(x)
\ee
where distribution $p^{(x)}$ is extremal in the set $S_Y$ and distribution $r$ is extremal in set $S_X$.
\label{lem:cond_ext}
\end{lemma}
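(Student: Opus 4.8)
The plan is to decompose the marginal and the conditionals separately into extremal points and then to glue these decompositions together multiplicatively. Since $S_X$ and $S_Y$ are convex subsets of the finite-dimensional probability simplices on $X$ and $Y$, I would first assume they are compact (closed), so that by the Minkowski--Carath\'eodory theorem every point is a finite convex combination of extremal points. Applying this to the marginal, write $p(x) = \sum_j \lambda_j\, r_j(x)$ with each $r_j$ extremal in $S_X$, $\lambda_j \ge 0$ and $\sum_j \lambda_j = 1$. Similarly, for each fixed $x$ the hypothesis gives $\{p(y|x)\}_y \in S_Y$, so $p(y|x) = \sum_{k} \mu^{(x)}_k\, p^{(x)}_k(y)$ with each $p^{(x)}_k$ extremal in $S_Y$ and $\sum_k \mu^{(x)}_k = 1$.

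The main obstacle is that the conditional decomposition depends on $x$: a priori the number of components and their weights differ from one value of $x$ to another, so one cannot simply read off a single mixture of product distributions. I would resolve this with a \emph{section} construction. Call a choice function $s$ that assigns to every $x \in X$ one index $s(x)$ in its decomposition a section; there are finitely many of them. To each section attach the conditional $P^{(s)}(y|x) := p^{(x)}_{s(x)}(y)$, which is extremal in $S_Y$ for \emph{every} $x$ simultaneously, and the weight $\nu_s := \prod_{x \in X}\mu^{(x)}_{s(x)}$. The product structure of the weights is exactly what makes the bookkeeping work: since $\sum_s \nu_s = \prod_x \big(\sum_k \mu^{(x)}_k\big) = 1$ and, fixing any $x_0$, factoring the sum over $s$ into the $x_0$-index and the rest gives $\sum_s \nu_s\, P^{(s)}(y|x_0) = p(y|x_0)$, one obtains the single convex decomposition $p(y|x) = \sum_s \nu_s\, P^{(s)}(y|x)$ in which every term is extremal in $S_Y$ for all $x$ at once.

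It then remains to combine the two decompositions. Writing $p(x,y) = p(x)\,p(y|x)$ and substituting both expansions yields $p(x,y) = \sum_{s}\sum_{j} \nu_s \lambda_j\, P^{(s)}(y|x)\, r_j(x)$, a convex combination (the weights $\nu_s \lambda_j$ are nonnegative and sum to one) of distributions of the required form $\tilde p(x,y) = p^{(x)}(y)\, r(x)$ with $p^{(x)}(\cdot) = P^{(s)}(\cdot|x)$ extremal in $S_Y$ and $r = r_j$ extremal in $S_X$. Finally I would check that each such $\tilde p$ is a genuine joint distribution, which is immediate since $\sum_{x,y} P^{(s)}(y|x) r_j(x) = \sum_x r_j(x)\sum_y P^{(s)}(y|x) = \sum_x r_j(x) = 1$ and all factors are nonnegative. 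I expect the section construction to be the only nonroutine step; the compactness assumption guaranteeing finite extremal decompositions is the technical point to state carefully.
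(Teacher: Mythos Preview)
Your proof is correct and follows essentially the same route as the paper: your ``section'' $s$ is precisely the paper's multi-index $(i_1,\ldots,i_K)$, your weight $\nu_s=\prod_x\mu^{(x)}_{s(x)}$ is the paper's product $\lambda^{(1)}_{i_1}\cdots\lambda^{(K)}_{i_K}$, and both arguments hinge on the same identity that the product weights reproduce each conditional $p(y|x)$ after summing out the irrelevant coordinates. The only differences are cosmetic (you decompose the marginal first rather than last, and you add the explicit compactness/Carath\'eodory remark to justify finite extremal decompositions, which the paper leaves implicit).
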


\begin{proof}

Clearly, it is enough to prove that $p(x,y)$ can be written as mixture of distributions 
\be
p'(x,y) =p(x)  p^{(x)}(y) 
\label{eq:needed_form}
\ee
where $p^{(x)}$ is extremal in $S_Y$.
Indeed, then we can decompose $p(x)$ into extremal points in $S_X$, and reach the form \eqref{lem:cond_ext}.

Let $p^{(i)}$ run over extremal elements of $S_Y$. We define the following distributions 
\be
p_{i_1,\ldots i_K} (x,y) =  p(x) p^{(i_x)}(y)
\ee
Clearly they are of the required form \eqref{eq:needed_form}.
We will now show that a suitable mixture of such distributions 
gives $p(x,y)$. To see this, note that since for each $x$, the distribution $p(y|x)$ 
belongs to $S_Y$, we can write it as a mixture of $p^{(i)}$'s 
\be
p(y|x)= \sum_i \lambda^{(x)}_i p^{(i)}(y) 
\ee
where 
\be
\sum_i \lambda^{(x)}_i=1,
\label{eq:lambda}
\ee
for each $x$. 
We will now show that 
\be
p(x,y)= \sum_{i_1,\ldots, i_N} \lambda^{(1)}_{i_1} \cdot \ldots\cdot \lambda^{(N)} _{i_N} 
p_{i_1,\ldots i_K} (x,y)
\label{eq:core}
\ee
which is what we need to prove, as by \eqref{eq:lambda}
we have 
\be
\sum_{i_1,\ldots, i_N} \lambda^{(1)}_{i_1} \cdot \ldots\cdot \lambda^{(N)} _{i_N} =1
\ee
and $p_{i_1,\ldots i_K} (x,y)$ are of the required form  \eqref{eq:needed_form}.
To prove the equality \eqref{eq:core}, we write
\ben
&&\sum_{i_1,\ldots, i_N} \lambda^{(1)}_{i_1} \cdot \ldots \cdot \lambda^{(N)} _{i_N} 
p_{i_1,\ldots i_K} (x,y) =  \nonumber \\
&&\sum_{i_1,\ldots, i_N} \lambda^{(1)}_{i_1} \cdot \ldots \cdot \lambda^{(N)} _{i_N} p(x) p^{(i_x)}(y)  =\nonumber \\
&&  \sum_{i_x} \lambda^{(x)}_{i_x} p(x) p^{(i_x)}(y)  = p(x) p(y|x)  = p(x,y).
\een
The last but one equality we obtain from the fact that only for index $i_x$ the summand 
is nontrivial, for other indices the summands are just $\lambda$'s which sum up to 1.
\end{proof}

Now we are in position to prove the proposition \ref{prop:SV-iid}. 
\begin{proof} [Proof of proposition  \ref{prop:SV-iid}] 
To prove the proposition,  we will apply the lemma \ref{lem:cond_ext} iteratively. The set $X$ will be the set of $n$ bits,
while the set $Y$ will correspond to a single bit. $S_Y$ then has two extremal points $(p_+, p_-)$ and $(p_-, p_+)$.
Let us first illustrate the lemma for the case of $X$ also being a single bit. 
Then simply, 
\be
\{p(x,y)\} = \Bigl(p(0) p(0|0) , p(0)  p(1|0), p(1)p(0|1), p(1)p(1|1) \Bigr)
\ee
Now, for $x=0$, we have decomposition 
\be
p(0|0)= \alpha_0 p_+  + (1-\alpha_0) p_- , \quad p(1|0) = \alpha_0 p_-  + (1-\alpha_0) p_+.
\ee
For $x=1$ we have some other decomposition 
\be
p(0|1)= \alpha_1 p_+  + (1-\alpha_1) p_- , \quad p(1|1)= \alpha_1 p_-  + (1-\alpha_1) p_+
\ee
To catch up with notation of the lemma, we have $\alpha_0=\lambda^0_1, 1-\alpha_0=\lambda^0_2$, 
and  $\alpha_0=\lambda^{(0)}_1, 1-\alpha_0=\lambda^{(0)}_2$, and $p^{(1)}=(p_+,p_-),p^{(2)}=(p_-,p_+)$
are extemal points from $S_Y$.
We can directly check that 
\ben
&&\bigl( p(0) p(0|0), p(0) p(1|0), p(1) p(0|1), p(1) p(1|1) \bigr)=\nonumber\\
&& \alpha_0 \alpha_1 \bigl( p(0)p_+, p(0) p_-,p(1)p_+ ,p(1)p_-  \bigr)+\nonumber\\
&& \alpha_0 (1-\alpha_1) \bigl(p(0)p_+ , p(0)p_- ,p(1)p_- , p(1)p_+ \bigr)+\nonumber\\
&& (1-\alpha_0) \alpha_1 \bigl(p(0)p_- , p(0)p_+ ,p(1)p_+ ,p(1)p_-  \bigr)+\nonumber\\
&& (1-\alpha_0) (1-\alpha_1) \bigl(p(0)p_- , p(0)p_+ , p(1)p_-, p(1)p_+ \bigr) \nonumber \\
\een
Thus we have shown explicitly, decomposition of $p(x,y)$ into distrbutions of the form  \eqref{eq:needed_form}. 
Now we further decompose 
the distribution $(p(0),p(1))$  into extremal points of $S_X$ which are in  this case the same 
as those of $S_Y$:  $(p_+,p_-)$ and $(p_-,p_+)$. 
Therefore  $\{p(x,y) \}$  is mixture of the eight probability distributions 

\ben
\bigl(p_+p_+, p_+p_-, p_-p_+, p_-p_- \bigr),\quad
\bigl(p_+p_+, p_+p_-, p_-p_-, p_-p_+ \bigr),\nonumber \\
\bigl(p_+p_-, p_+p_+, p_-p_+, p_-p_- \bigr),\quad
\bigl(p_+p_-, p_+p_+, p_-p_-, p_-p_+ \bigr),\nonumber \\
\bigl(p_-p_+, p_-p_-, p_+p_+, p_+p_- \bigr),\quad
\bigl(p_-p_+, p_-p_-, p_+p_-, p_+p_+ \bigr),\nonumber\\
\bigl(p_-p_-, p_-p_+, p_+p_+, p_+p_- \bigr),\quad
\bigl(p_-p_-, p_-p_+, p_+p_-, p_+p_+ \bigr),\nonumber\\
\een
where the ordering is as follows: 
\be
\bigl(p(0,0), p(0,1),p(1,0),p(1,1)\bigr).
\ee
Note that the first distribution is precisely the Bernoulli distribution, 
with probability of $0$ in single trial being $p=p_+$. This distribution is memoryless. 
The other distributions are not memoryless, but are related to the Bernoulli distribution by permutation of probabilities (not bits). Note that only 8 out of 24 permutations appear.

For $n$ bits, the lemma implies that the extremal probability distributions 
are created from product of the extremal distributions for $n-1$ bits, 
as follows. For given extremal distribution $\bigl(r(1), \ldots, r(K)\bigr)$ with $K=2^{n-1}$, we construct the following 
extremal point:
\be
\bigl(r(1) p_+, r(1) p_-, r(2) p_+, r(2) p_-, \ldots, r(K) p_+, r(K) p_-\bigr).
\label{eq:extr-bern}
\ee
The other extremal points can be generated from it by changing the order of $p_+$ and $p_-$  
for each $x=1,\ldots, K$ independently. This implies, that all the extremal points are 
permutations of the  above one.
Now, by induction we assume that the distribution $\bigl(r(1), \ldots, r(K)\bigr)$ over $n-1$ bits is a permutation 
of Bernoulli distribution over $n$ bits with parameter $p=p_+$. Thus, there is permutation $\sigma$ 
that reorders it, so that it becomes Bernoulli. We can apply this permutation to reorder 
pairs $(r(i) p_+, r_i p_-)$ in the distribution \eqref{eq:extr-bern}. The resulting distribution 
is Bernoulli for $n$ bits with parameter $p_+$. Thus \eqref{eq:extr-bern} is permutation of Bernoulli, and hence 
all other extremal points are, too, since they are its permutations. 
Note that not all permutations are allowed, because the above construction has the structure of a tree. 
\end{proof}

\begin{lemma} The Ky Fan norm of the Bernoulli distribution $||B||_{2^{r+1}-1}$ with parameter $p_{-} = (1/2 - \epsilon)$ is bounded for large $r$ by 
\be 
{2r \choose cr} p_{-}^{cr} p_{+}^{(2-c)r} < ||B||_{2^{r+1} - 1} < k {2r \choose cr} p_{-}^{cr} p_{+}^{(2-c)r},
\ee
where $c$ is the solution to $2^{2r H(c/2)} = 2^r$ ($H(x)$ denotes binary entropy giving $c \approx 0.22$) and $k = \frac{(2-c) (1-2\epsilon)}{2(1-c -2\epsilon)} \approx \frac{0.89 (1-2\epsilon)}{2(0.39 - \epsilon)}$. 
\end{lemma}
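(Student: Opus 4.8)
The plan is to exploit the explicit form of the summand. Writing $T_i = \binom{2r}{i} p_+^{2r-i} p_-^{i}$ for the total weight of all length-$2r$ strings carrying exactly $i$ copies of the less likely symbol, the Ky Fan norm is $||B||_{2^{r+1}-1} = \sum_{i=0}^{m} T_i$. First I would pin down the index $m$: since $\sum_{i=0}^{m} \binom{2r}{i}$ counts the strings retained and must reach $2^{r+1}-1$, and since this partial binomial sum is governed to leading exponential order by $2^{2rH(m/2r)}$, setting $m = cr$ and solving $2^{2rH(c/2)} = 2^{r}$ gives $H(c/2) = 1/2$, hence $c \approx 0.22$. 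This fixes the regime for all subsequent estimates.

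The key structural observation is that $T_i$ is \emph{increasing} throughout the retained range. Indeed $T_{i+1}/T_i = \frac{2r-i}{i+1}\cdot\frac{p_-}{p_+}$, which exceeds $1$ until the peak index $i^{\star} \approx 2rp_- = r(1-2\epsilon)$. Because $c \approx 0.22 < 1-2\epsilon$ in the relevant window of $\epsilon$, we have $m = cr < i^{\star}$, so $T_m$ is the largest term in the sum. The lower bound is then immediate: keeping only the top term and discarding the other positive contributions, $||B||_{2^{r+1}-1} > T_m = \binom{2r}{cr} p_-^{cr} p_+^{(2-c)r}$, with strict inequality for $r$ large enough that $m\ge 1$.

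For the upper bound I would dominate the tail below $m$ by a geometric series. The ratio $T_{i-1}/T_i = \frac{i}{2r-i+1}\cdot\frac{p_+}{p_-}$ is increasing in $i$, so on the range $i \leq m$ it is maximised at $i=m$, giving $\rho := \frac{m}{2r-m+1}\cdot\frac{p_+}{p_-} \to \frac{c}{2-c}\cdot\frac{p_+}{p_-}$ as $r\to\infty$. Iterating yields $T_{m-j} \leq \rho^{j} T_m$, whence $\sum_{i=0}^{m} T_i \leq T_m \sum_{j\geq 0}\rho^{j} = \frac{T_m}{1-\rho}$, valid provided $\rho<1$. Substituting $p_+/p_- = (1+2\epsilon)/(1-2\epsilon)$ and simplifying the numerator $(2-c)(1-2\epsilon) - c(1+2\epsilon) = 2(1-c-2\epsilon)$ gives precisely $k = \frac{1}{1-\rho} = \frac{(2-c)(1-2\epsilon)}{2(1-c-2\epsilon)}$, as claimed.

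The main obstacle, which is also the only place the hypotheses truly bite, is the convergence requirement $\rho<1$: a short computation shows it is equivalent to $c < 1-2\epsilon$, the very same inequality that places $m$ strictly to the left of the binomial peak $i^{\star}$. I would therefore verify that for the critical value $c\approx 0.22$ this holds comfortably throughout the randomness-amplification window (roughly $\epsilon$ below $0.096$), legitimising the geometric bound. The remaining care is purely in the large-$r$ replacements — treating $m=cr$ as exact and passing from $\frac{m}{2r-m+1}$ to $\frac{c}{2-c}$ — which affect only lower-order corrections and hence preserve the stated asymptotic constant $k$.
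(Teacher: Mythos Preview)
Your proposal is correct and follows essentially the same route as the paper: identify $m=cr$ via the Stirling approximation $\binom{2r}{cr}\approx 2^{2rH(c/2)}=2^r$, take the single term $T_m$ for the lower bound, and for the upper bound control the ratio $T_{i-1}/T_i \le \frac{i}{2r-i}\cdot\frac{p_+}{p_-}\le \alpha=\frac{c(1+2\epsilon)}{(2-c)(1-2\epsilon)}$ to dominate $\sum_{i\le m}T_i$ by the geometric series $T_m/(1-\alpha)=kT_m$. Your observation that the convergence condition $\rho<1$ is exactly $c<1-2\epsilon$ (i.e.\ $\epsilon<0.39$) matches the paper's parenthetical remark.
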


\begin{proof}
As seen in the text, the Ky Fan norm of the Bernoulli distribution
\be
||B||_{2^{r+1}-1}= \sum_{i=0}^{m} {2r \choose i} p_+^{2r-i}p_-^{i}
\ee 
can be reformulated using
\be
m \leq \min_{c} \{c r: \sum_{i=0}^{cr} {2r \choose i} \geq 2N -1 \}
\ee 
into finding the minimum $c$ that satisfies the inequality above. 
Note that for $c < 1$
\be
 {2r \choose cr} < \sum_{i=0}^{cr} {2r \choose i} < (cr+1) {2r \choose cr},
\label{c-bound}
\ee 
since ${2r \choose cr}$ is the largest term in the sum. For large $r$ (and consequently large $N = 2^r$), by the Stirling approximation, we have that ${2r \choose cr} \approx 2^{2r H(c/2)}$ where $H(x)$ denotes the binary entropy. Therefore, from $\sum_{i=0}^{cr} {2r \choose i} \geq  2N-1 (\approx 2^r)$ we obtain the condition
\be
2^{2r H(c/2)} = 2^r
\ee
giving the value $c \approx 0.22$ which is asymptotically exact because of the inequalities in (\ref{c-bound}). 

Note that then $||B||_{2^{r+1}-1}$ is trivially lower bounded by ${2r \choose cr} p_{-}^{cr} p_{+}^{(2-c)r}$ as these form a subset of the probabilities appearing in $||B||_{2^{r+1}-1}$. To derive the upper bound, we use the observation that for $0 \leq i \leq cr$,
\ben
\frac{{2r \choose i-1} p_{-}^{i-1} p_{+}^{2r-i+1}}{{2r \choose i} p_{-}^i p_{+}^{2r-i}}  < \frac{i}{2r-i} \frac{p_{+}}{p_{-}} \leq \alpha.
\een
where the constant $\alpha = \frac{c (1+2\epsilon)}{(2-c) (1-2\epsilon)} < 1$ for $\epsilon < 0.39$. 
Iteratively applying the inequality, for $0 \leq i \leq cr$,
\be
{2r \choose i} p_{-}^i p_{+}^{2r-i} < \alpha^{cr-i} {2r \choose cr} p_{-}^{cr} p_{+}^{(2-c)r}.
\ee 
Consequently, we obtain for the Ky Fan norm,
\ben 
||B||_{2^{r+1}-1} &<&  \sum_{i=0}^{cr} \alpha^{cr-i} {2r \choose cr} p_{-}^{cr} p_{+}^{(2-c)r} \nonumber \\
&<& {2r \choose cr} p_{-}^{cr} p_{+}^{(2-c)r} \sum_{i=0}^{\infty} \alpha^i \nonumber \\
&<& \frac{(2-c) (1-2\epsilon)}{2(1-c -2\epsilon)} {2r \choose cr} p_{-}^{cr} p_{+}^{(2-c)r} 
\een
which establishes the upper bound.
\end{proof}

\end{document}